\DeclarePairedDelimiterX{\Iintv}[1]{\llbracket}{\rrbracket}{\iintvargs{#1}}
\NewDocumentCommand{\iintvargs}{>{\SplitArgument{1}{,}}m}
{\iintvargsaux#1} %
\NewDocumentCommand{\iintvargsaux}{mm} {#1\mkern1.5mu..\mkern1.5mu#2}
\newtheorem{theorem}{Theorem}
\newtheorem{lemma}{Lemma}
\newtheorem{assumption}{Assumption}
\newtheorem{corollary}{Corollary}
\newtheorem{definition}{Definition}
\def\BibTeX{{\rm B\kern-.05em{\sc i\kern-.025em b}\kern-.08em
    T\kern-.1667em\lower.7ex\hbox{E}\kern-.125emX}}
\begin{document}

\title{Affine Frequency Division Multiplexing for Compressed Sensing of Time-Varying Channels}

\author{
Wissal Benzine\textit{$^{1,2}$}, Ali Bemani\textit{$^1$}, Nassar Ksairi\textit{$^1$}, and Dirk Slock\textit{$^2$} \\
\textit{$^1$}Mathematical and Algorithmic Sciences Lab, Huawei France R\&D, Paris, France \\ \textit{$^2$}Communication Systems Department, EURECOM, Sophia Antipolis, France\\
Emails:
\{wissal.benzine1, ali.bemani, nassar.ksairi\}@huawei.com, Dirk.Slock@eurecom.fr
}

% \author{\IEEEauthorblockN{Wissal Benzine, Ali Bemani, Nassar Ksairi}
% \IEEEauthorblockA{
% \textit{Mathematical and Algorithmic Sciences Lab, Huawei France R\&D}\\
% Boulogne-Billancourt, France \\
% \{wissal.benzine1, ali.bemani, nassar.ksairi\}@huawei.com}
% % \and
% % \IEEEauthorblockN{Ali Bemani}
% % \IEEEauthorblockA{\textit{dept. name of organization (of Aff.)} \\
% % \textit{name of organization (of Aff.)}\\
% % City, Country \\
% % email address or ORCID}
% % \and
% % \IEEEauthorblockN{Nassar Ksairi}
% % \IEEEauthorblockA{\textit{dept. name of organization (of Aff.)} \\
% % \textit{name of organization (of Aff.)}\\
% % City, Country \\
% % email address or ORCID}
% \and
% \IEEEauthorblockN{Dirk Slock}
% \IEEEauthorblockA{
% \textit{Communication Systems Department, EURECOM}\\
% Sophia Antipolis, France \\
% Dirk.Slock@eurecom.fr}
% }

\maketitle

\begin{abstract}
 This paper addresses compressed sensing of linear time-varying (LTV) wireless propagation links under the assumption of \emph{double sparsity} i.e., sparsity in both the delay and Doppler domains, using Affine Frequency Division Multiplexing (AFDM) measurements. By rigorously linking the \emph{double sparsity} model to the \emph{hierarchical sparsity} paradigm, a compressed sensing algorithm with recovery guarantees is proposed for extracting delay-Doppler profiles of LTV channels using AFDM. Through mathematical analysis and numerical results, the superiority of AFDM over other waveforms in terms of channel estimation overhead and minimal sampling rate requirements in sub-Nyquist radar applications is demonstrated.%, making AFDM the optimal choice for LTV wireless channels characterized by delay-Doppler sparsity

\end{abstract}

\begin{IEEEkeywords}
compressed sensing, channel estimation, time-varying channels, AFDM, chirps, sparsity
\end{IEEEkeywords}

\section{Introduction}
Time-varying wireless channels in many propagation scenarios, especially in high-frequency bands, are characterized by sparsity in both delay and Doppler domains \cite{sparse_sbl}. Such sparsity is an important feature of wireless propagation that can be exploited to improve channel estimation performance \cite{sparse_spawc} or radar sensing \cite{Eldar_sub_Nyquist}. Delay-Doppler sparsity was assumed in \cite{sparse_sbl,kron_sbl} and leveraged to conceive enhanced channel estimation schemes for time-varying channels using the sparse Bayesian learning (SBL) framework. However, delay-Doppler sparsity was modeled as the sparsity of a one-dimensional array with no way to assign different sparsity levels to the delay and Doppler domains. To obtain a sparsity model compatible with the latter requirement, one can in principle turn to the \emph{hierarchical sparsity} framework \cite{hierarchical}. Indeed, the concept and the tools of hierarchical sparsity were applied in \cite{hierarchical_ce} to the problem of multi-input multi-output (MIMO) channel estimation under delay and angular domains sparsity.
% However, to the best of our knowledge, the framework has not been applied yet to delay-Doppler sparsity models.
In sensing and radar applications, the sub-Nyquist radar paradigm \cite{Eldar_sub_Nyquist} leverages wireless channel sparsity to develop sub-Nyquist receivers. However, most of its solutions cannot take advantage of Doppler domain sparsity for lowering the sampling rate and some of them require complex analog-domain processing.

In \cite{AFDM_letter}, the relevance of affine frequency division multiplexing (AFDM) \cite{BemaniAFDM_TWC}, a recently proposed waveform based on the discrete affine Fourier transform (DAFT), for efficient self-interference cancellation in mono-static integrated sensing and communications (ISAC) scenarios was demonstrated. In \cite{afdm_gc}, we had established its relevance for time-varying channel estimation under delay and Doppler sparsity with a known delay-Doppler profile (DDP). Using tools from the framework of hierarchical sparsity, the current work tackles the problem of delay-Doppler sparse recovery when no such DDP knowledge is assumed, with applications to both time-varying channel estimation and sub-Nyquist radar sensing.

\subsection{Contributions}
I) The statistical notion of delay-Doppler sparsity is rigorously linked to the hierarchical sparsity paradigm. II) This link is used to propose a sparse recovery algorithm based on AFDM measurements for delay-Doppler profiles of wireless channels. III) Using hierarchical-sparsity mathematical tools, closed-form asymptotic results for the performance of this recovery is provided. Finally, IV) this performance analysis is used to show the superiority of AFDM over recovery schemes based on other waveforms in terms of channel estimation overhead and sensing receiver minimal sampling rate requirements.
\subsection{Notations}
$\mathrm{Bernoulli}(p)$ is the Bernoulli distribution with probability $p$ and $\mathrm{B}(n,p)$ is the binomial distribution with parameters $(n,p)$.
Notation $X\sim F$ means that random variable $X$ has distribution $F$.
% $\mathbb{P}$ stands for the probability measure associated with the random variable in question and $\mathbb{E}[X]$ for the expected value of random variable $X$.
If $\mathcal{A}$ is a set, $|\mathcal{A}|$ stands for its cardinality. The set of all integers between $l$ and $m$ (including $l$ and $m$, $(l,m)\in\mathbb{Z}^2$) is denoted $\Iintv{l,m}$.
% For a matrix $\mathbf{M}$, $[\mathbf{M}]_{c}$ stands for the $c$-th column.
The ceiling operation is denoted as $\lceil.\rceil$. The modulo $N$ operation is denoted as $(\cdot)_N$.

\section{Background: AFDM}
In AFDM, modulation is achieved through the use of DAFT which is a discretized version \cite{erseghe2005multicarrier} of the affine Fourier transform (AFT) \cite{BemaniAFDM_TWC} with the discrete chirp $e^{-\imath2\pi (c_2k^2+{\frac{1}{N} }kn+c_1n^2)}$ as its kernel (see Fig. \ref{fig:afdm_t_f}). Here, $c_1$ and $c_2$ are parameters that can be adjusted depending on the delay-Doppler profile of the channel (in this work, we adjust $c_1$ based on the delay-Doppler sparsity levels).
Consider a set of quadrature amplitude modulation (QAM) symbols $\{x_k\}_{k=0\cdots N-1}$. AFDM employs inverse DAFT (IDAFT) to map $\{x_k\}_{k=0\cdots N-1}$ to %$\{s_n\}_{n=0\cdots N-1}$:
\begin{equation}
\label{eq:AFDM_mod}
    s_n = \frac{1}{\sqrt{N}}\sum_{k = 0}^{N-1}x_ke^{\imath2\pi (c_2k^2+{\frac{1}{N}}kn+c_1n^2)}, n= 0\cdots N-1
\end{equation}
with the following so called {\emph{chirp-periodic prefix}} (CPP)
\begin{equation}
    s_n = s_{N+n}e^{-\imath2\pi c_1(N^2+2Nn)},\quad n = -L_{\mathrm{CPP}}\cdots -1
\end{equation}
where $L_{\mathrm{CPP}}$ denotes an integer that is greater than or equal to the number of samples required to represent the maximum delay of the wireless channel. The CPP simplifies to a cyclic prefix (CP) whenever $2c_1N$ is integer and $N$ is even, an assumption that will be considered to hold from now on.
\begin{figure}
    \centering
    \includegraphics[width=\columnwidth]{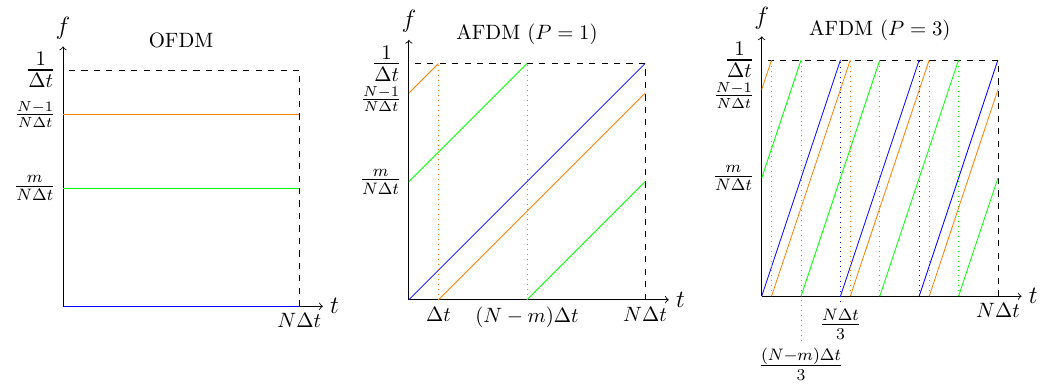}
    \caption{Time-frequency representation of three subcarriers of OFDM and AFDM ($c_1=\frac{P}{2N}$). Each subcarrier is represented with a different colour.}
    \label{fig:afdm_t_f}
\end{figure}

\section{Background: Hierarchical sparsity}
\begin{definition}[Hierarchical sparsity, \cite{hierarchical}]
\label{def:hierarchical}
A vector $\mathbf{x} \in \mathbb{C}^{NM}$ is $\left(s_{N},s_{M}\right)$-sparse if $\mathbf{x}$ consists of $N$ blocks each of size $M$, with at most $s_{N}$ blocks having non-vanishing elements and each non-zero block itself being $s_{M}$-sparse.
\end{definition}
To analyze hierarchically sparse recovery schemes, a modified version of the restricted isometry property (RIP) called the hierarchical RIP (HiRIP) was proposed in the literature. 
\begin{definition}[HiRIP, \cite{hierarchical}]
\label{def:HiRIP}
The HiRIP constant of a matrix $\mathbf{A}$, denoted by $\delta_{s_{N},s_{M}}$, is the smallest $\delta \geq 0$ such that 
\begin{equation}
     \label{eq:HiRIP}
     (1- \delta)\left\|\mathbf{x}\right\|^2\leq \left\|\mathbf{A}\mathbf{x}\right\|^2\leq(1+ \delta)\left\|\mathbf{x}\right\|^2
 \end{equation}
 for all $\left(s_{N},s_{M}\right)$-sparse vectors $\mathbf{x} \in \mathbb{C}^{NM}$.
\end{definition}

\section{System model}
\subsection{Doubly sparse linear time-varying (DS-LTV) channels}
In an LTV channel with $L$ paths, the complex gain $h_{l,n}$ ($n\in\Iintv{-L_{\rm CPP},N-1}$) of the $l$-th path varies with time as
 \begin{equation}
     \label{eq:ch_model}
     h_{l,n}=\sum_{q=-Q}^{Q}\alpha_{l,q}I_{l,q}e^{\imath 2\pi\frac{nq}{N}}, l=0\cdots L-1
 \end{equation}
Here, $I_{l,q}$ for any $l$ and $q$ is a binary random variable that, when non-zero, indicates that a channel path with delay $l$, Doppler shift $q$ and complex gain $\alpha_{l,q}$ is active and contributes to the channel output. Note that the distribution of the random variables $\left\{I_{l,q}\right\}_{l,q}$ controls the kind of sparsity the LTV channel might have. The complex gain is assumed to satisfy $\alpha_{l,q}\sim\mathcal{CN}\left(0,\sigma_{\alpha}^2\right)$ with $\sigma_{\alpha}^2$ satisfying the channel power normalization $\sum_{l=0}^{L-1}\sum_{q=-Q}^{Q}\mathbb{E}\left[\left|\alpha_{l,q}\right|^2I_{l,q}\right]=1$. Note that this model is an on-grid approximation of a time-varying channel. For instance, the Doppler shifts are assumed to be integers in $\Iintv{-Q,Q}$ when normalized with the resolution associated with the transmission duration.
\begin{definition}[Delay-Doppler double sparsity, \cite{afdm_gc}]
\label{def:dD_sparsity}
An LTV channel is doubly sparse if there exist $0<p_d,p_D<1$ s.t.
\begin{equation}
    I_{l,q}=I_lI_q^{(l)}, \forall (l,q)\in\Iintv{0,L-1}\times\Iintv{-Q,Q}
\end{equation}
where $I_l\sim\mathrm{Bernoulli}(p_d)$ and $I_q^{(l)}\sim\mathrm{Bernoulli}(p_D)$.
\end{definition}
Note that under Definition \ref{def:dD_sparsity}, $s_{\rm d}\triangleq\mathbb{E}\left[\sum_{l}I_l\right]=p_dL$ is the mean number of active delay taps in the delay-Doppler profile of the channel and can be thought of as the {\it delay domain sparsity level} while $s_{\rm D}\triangleq\mathbb{E}\left[\sum_{q}I_q^{(l)}\right]=p_D(2Q+1)$ is the mean number of active Doppler bins per delay tap and can be thought of as the {\it Doppler domain sparsity level}.
Fig. \ref{fig:examples} illustrates three different delay-Doppler sparsity models, fully described in \cite{afdm_gc} and dubbed Type-1, Type-2 and Type-3, that all fall under the scope of Definition \ref{def:dD_sparsity} each with an additional different assumption on $I_l$ and $I_q^{(l)}$. Here, we just point out that the difference between Type-2 and Type-3 of Figures \ref{fig:examples}-(b) and \ref{fig:examples}-(c), respectively, is that in the latter the active Doppler bins per delay tap appear in clusters of random positions but of deterministic length as opposed to the absence of clusters in the former. The case where all the delay taps have the same (random) sparsity (as in Type-1 models of Fig. \ref{fig:examples}-(a)) also falls under Definition \ref{def:dD_sparsity} by setting $I_q^{(l)}=I_q^{(0)},\forall l$.
\begin{figure}
  \centering
  \begin{tabular}{ c @{\hspace{5pt}} c @{\hspace{5pt}} c}
  \includegraphics[width=.3\columnwidth] {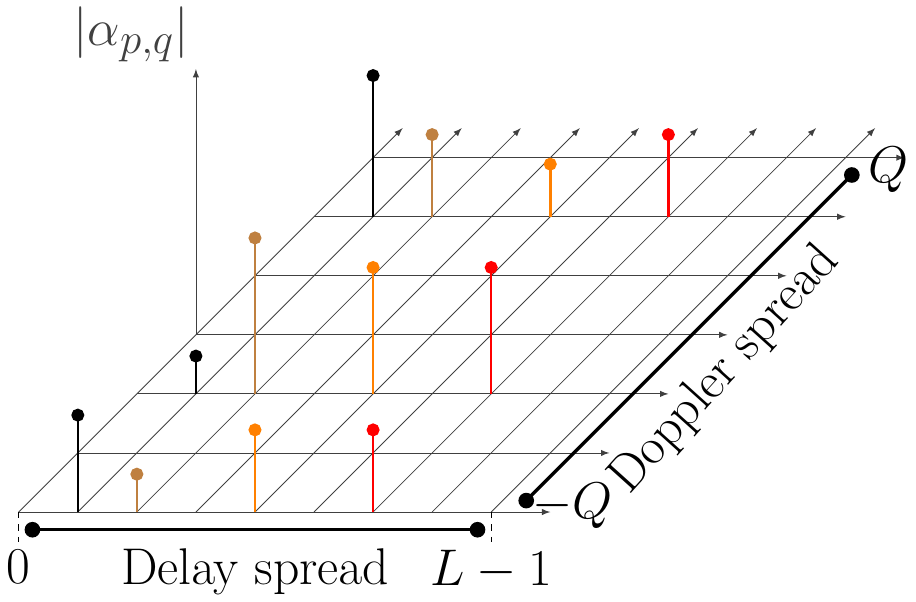} &
    \includegraphics[width=.3\columnwidth]{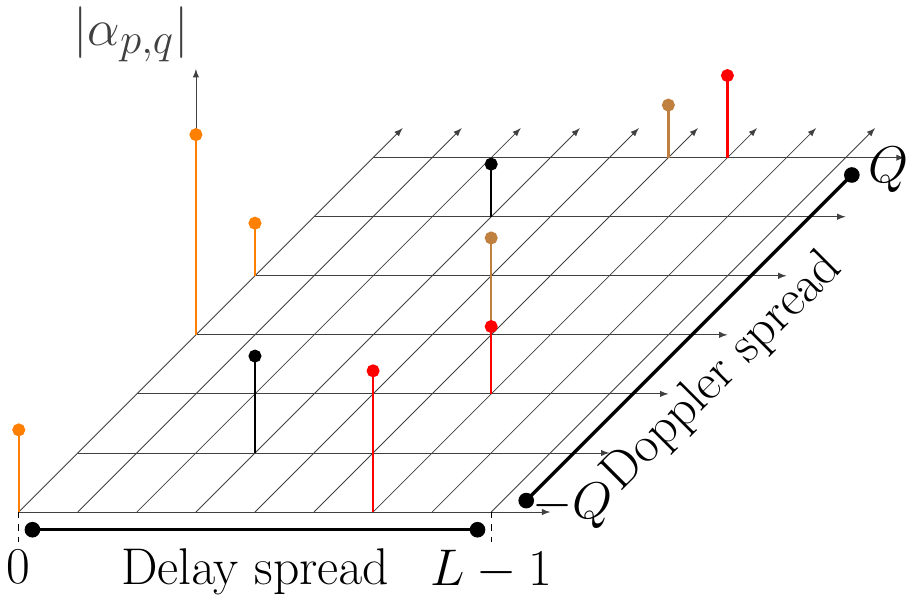} &
      \includegraphics[width=.3\columnwidth]{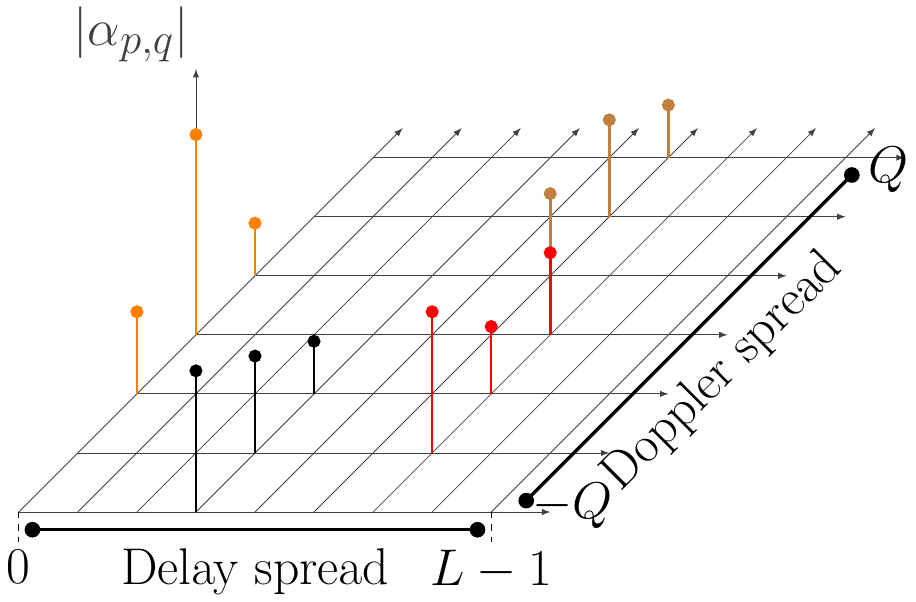} \\
    \small (a) &
      \small (b)&
      \small (c)
  \end{tabular}
  \medskip
  \caption{Examples of channels satisfying (a) Type-1 delay-Doppler sparsity, (b) Type-2 delay-Doppler sparsity, (c) Type-3 delay-Doppler sparsity}
  \label{fig:examples}
\end{figure}

\subsection{Relation to hierarchical sparsity}
DS-LTV sparsity is probabilistic while hierarchical sparsity of Definition \ref{def:hierarchical} is deterministic. The two models are nonetheless related: if vectorized to a concatenation of its rows, random matrix $\left[\alpha_{l,q}I_{l,q}\right]_{l,q}$ defines a vector $\boldsymbol{\alpha}\in\mathbb{C}^{(2Q+1)L}$ that consists of $L$ blocks each of size $2Q+1$, where in average $s_{\rm d}$ blocks have non-zero entries and where each non-zero block itself is in average  $s_{\rm D}$-sparse. To ensure sparsity in a stronger sense i.e., with high probability (as $L,Q,Lp_{\rm d},(2Q+1)p_{\rm D}$ grow), we require that the following assumption hold.
\begin{assumption}
    \label{assum:technical}
    $\left\{I_l\right\}_{l=0\cdots L-1}$ are mutually independent. Moreover, the complementary cumulative distribution function (CCDF) $\overline{F}_{S_{\mathrm{D},l}}(m)$ of the random variable $S_{\mathrm{D},l}\triangleq\sum_{q=-Q}^{Q}I_{q}^{(l)}$ for any $l\in\Iintv{0,L-1}$ is upper-bounded for any integer $m>(2Q+1)p_{\rm D}$ by the CCDF of $\mathrm{B}\left(2Q+1,p_{\rm D}\right)$.
\end{assumption}
Type-1 and 2 models are made to satisfy the CCDF upper bound by requiring that $\{I_{q}^{(0)}\}_q$ in the first and $\{I_{q}^{(l)}\}_q$ for any $l$ in the second to be mutually independent (and to thus satisfy $\overline{F}_{S_{\mathrm{D},l}}(m)=\overline{F}_{\mathrm{B}\left(2Q+1,p_{\rm D}\right)}(m),\forall m$). For Type-3 models, $S_{\mathrm{D},l}$ is deterministic and hence its CCDF is trivially upper-bounded.
As the following lemma rigorously shows, the mutual independence of $\left\{I_l\right\}_{l=0\cdots L-1}$ in Assumption \ref{assum:technical} guarantees strong delay domain sparsity while Doppler sparsity is guaranteed in a more explicit manner by the CCDF upper bound.
% To get some insight into this assumption (before rigorously proving these insights by the following lemma), first note that since $I_l\sim\mathrm{Bernoulli}\left(p_{\rm d}\right)$, the mutual independent of $\left\{I_l\right\}_{l=0\cdots L-1}$ implies that $\sum_{l=0}^{L-1}I_l\sim \mathrm{B}\left(L,p_{\rm d}\right)$. And since the CCDF of the binomial distribution decreases ``rapidly'' away from its value at the distribution mean, Assumption \ref{assum:technical} guarantees delay domain sparsity in the strong sense. As for Doppler domain sparsity, it is guaranteed by Assumption \ref{assum:technical} in a more explicit manner by the CCDF upper bound.
\begin{lemma}
 \label{lem:DS_HS}
 Under Assumption \ref{assum:technical}, the vector $\boldsymbol{\alpha}$ is $\left(s_{\rm d},s_{\rm D}\right)$-sparse with probability $1-e^{-\Omega\left(\min\left((2Q+1)p_{\rm D},Lp_{\rm d}\right)\right)}$.
\end{lemma}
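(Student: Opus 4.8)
The plan is to control the two requirements of Definition~\ref{def:hierarchical} separately by concentration: the block count via the mutual independence of $\{I_l\}$ from Assumption~\ref{assum:technical}, and the per-block sparsity via the stochastic domination built into the CCDF bound of that same assumption. First I would dispose of the amplitudes: since $\alpha_{l,q}\sim\mathcal{CN}(0,\sigma_\alpha^2)$ is continuous, $\alpha_{l,q}\neq 0$ almost surely, so almost surely the support of the $l$-th block of $\boldsymbol{\alpha}$ is exactly $\{q:I_l=1,\ I_q^{(l)}=1\}$; in particular block $l$ is nonzero iff $I_l=1$, the number of nonzero blocks is at most $S_{\mathrm d}\triangleq\sum_{l=0}^{L-1}I_l$, and an active block $l$ has sparsity $S_{\mathrm D,l}=\sum_{q=-Q}^{Q}I_q^{(l)}$. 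It then suffices to show that, with the claimed probability, $S_{\mathrm d}\le(1+\epsilon)s_{\rm d}$ and $S_{\mathrm D,l}\le(1+\epsilon)s_{\rm D}$ for every active $l$, for an arbitrarily small fixed $\epsilon>0$; the constant inflation is unavoidable (since $s_{\rm d},s_{\rm D}$ are precisely the means of the binomials involved) and immaterial for the recovery guarantees derived later.

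For the delay part I would use that, by Assumption~\ref{assum:technical}, $S_{\mathrm d}\sim\mathrm{B}(L,p_{\rm d})$ with mean $s_{\rm d}=Lp_{\rm d}$, so the multiplicative Chernoff bound gives $\Pr[S_{\mathrm d}>(1+\epsilon)s_{\rm d}]\le e^{-c_\epsilon s_{\rm d}}$ with $c_\epsilon=(1+\epsilon)\ln(1+\epsilon)-\epsilon>0$; call this bad event $E_1$. For the Doppler part, for each fixed $l$ Assumption~\ref{assum:technical} gives $\overline{F}_{S_{\mathrm D,l}}(m)\le\overline{F}_{\mathrm{B}(2Q+1,p_{\rm D})}(m)$ for every integer $m>(2Q+1)p_{\rm D}=s_{\rm D}$; taking $m\approx(1+\epsilon)s_{\rm D}$ and invoking the Chernoff bound for $\mathrm{B}(2Q+1,p_{\rm D})$ yields $\Pr[S_{\mathrm D,l}>(1+\epsilon)s_{\rm D}]\le e^{-\Omega(s_{\rm D})}$. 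A union bound over the active blocks, whose number is at most $(1+\epsilon)s_{\rm d}$ off $E_1$, then gives
\[
\Pr\!\left[\,\boldsymbol{\alpha}\ \text{not}\ \big(\lceil(1+\epsilon)s_{\rm d}\rceil,\lceil(1+\epsilon)s_{\rm D}\rceil\big)\text{-sparse}\,\right]\ \le\ e^{-c_\epsilon s_{\rm d}}+(1+\epsilon)s_{\rm d}\,e^{-\Omega(s_{\rm D})},
\]
where restricting the union bound to \emph{active} blocks (rather than all $L$) uses the independence of the families $\{I_l\}_l$ and $\{I_q^{(l)}\}_{l,q}$ implicit in the model; one may instead union over all $L$ blocks at the cost of replacing the prefactor $s_{\rm d}$ by $L$.

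It then remains to absorb the polynomial prefactor: in the regime $s_{\rm d}=Lp_{\rm d}\to\infty$, $s_{\rm D}=(2Q+1)p_{\rm D}\to\infty$, the term $\ln s_{\rm d}$ (or $\ln L$) is dominated by $\min(s_{\rm d},s_{\rm D})$, so the bound becomes $e^{-\Omega(\min(s_{\rm d},s_{\rm D}))}=e^{-\Omega(\min((2Q+1)p_{\rm D},Lp_{\rm d}))}$, which is the claim. I expect this last step to be the only delicate point: making the union-bound prefactor disappear into the exponent needs $s_{\rm d}$ and $s_{\rm D}$ to be (at worst) polynomially related, which is the physically relevant situation—$L,Q$ scale polynomially in the block length and $p_{\rm d},p_{\rm D}$ are at worst polynomially small—and is exactly what the $\Omega(\cdot)$ notation and the ``as $L,Q,Lp_{\rm d},(2Q+1)p_{\rm D}$ grow'' qualifier are there to encode. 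Everything else is a routine application of Chernoff bounds once the almost-sure reduction of the first paragraph is in place.
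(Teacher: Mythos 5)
Your proposal is correct and follows essentially the same route as the paper's Appendix A: a Chernoff bound on $S_{\rm d}\sim\mathrm{B}(L,p_{\rm d})$ evaluated at $(1+\epsilon)Lp_{\rm d}$, a Chernoff bound on each $S_{\mathrm{D},l}$ via the CCDF domination by $\mathrm{B}(2Q+1,p_{\rm D})$, and a union bound over the active delay taps, combined to give the stated exponent. The extra details you supply --- the almost-sure non-vanishing of the Gaussian gains, the explicit $(1+\epsilon)$ inflation of $s_{\rm d},s_{\rm D}$, and the absorption of the union-bound prefactor into the exponent under the stated asymptotic regime --- are points the paper leaves implicit, not a different argument.
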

\begin{proof}
    The proof of the lemma is given in Appendix \ref{app:proof_lem}.
\end{proof}

\subsection{AFDM signal model on DS-LTV channels}
% Mostly the same as in the GLOBECOM paper \dots
The received samples at the channel output are
\begin{equation}
    r_n = \textstyle\sum_{l = 0}^{L-1}s_{n-l}h_{l,n}\label{r_n} + z_n,\quad n= 0\cdots N-1,
\end{equation}
where $z_n\sim\mathcal{CN}\left(0,\sigma_w^2\right)$ represents the i.i.d. Gaussian noise process. After discarding the CPP (assumed to satisfy $L-1\leq L_{\rm CPP}$), the DAFT domain output symbols are
\begin{align}
    &y_k = \frac{1}{\sqrt{N}}\sum_{n = 0}^{N-1}r_ne^{-\imath2\pi (c_2k^2+{\frac{kn}{N}}+c_1n^2)}, k= 0\cdots N-1\nonumber\\
    &=\sum_{l = 0}^{L-1}\sum_{q=-Q}^{Q}\alpha_{l,q}I_{l,q} 
    % e^{\imath\frac{2\pi}{N}(Nc_1l^2-ml + Nc_2(m^2 - k^2))}x_m +w_k,
    e^{\imath2\pi(c_1l^2-\frac{ml}{N} + c_2(m^2 - k^2))}x_m +w_k,
    \label{eq:y_output_integer}
\end{align}
where the second equality is obtained using the input-output relation given in \cite{BemaniAFDM_TWC}, $w_k$ is i.i.d. and $\sim\mathcal{CN}\left(0,\sigma_w^2\right)$ and where $ m \triangleq (k - q + 2Nc_1 l)_N$. Note how the Doppler components of different delay taps are mixed in the DAFT domain since a path occupying the $(l,q)$ grid point in the delay-Doppler domain appears as a $q-2Nc_1l$ shift in the DAFT domain.

\section{Compressed-sensing estimation of DS-LTV channels using AFDM}
\subsection{DS-LTV compressed-sensing channel estimation problem}
%Here we define the matrix $\mathbf{M}_{\rm p}$ as we did in the GLOBECOM paper (except that the unknown vector is $\boldsymbol{\alpha}$, and not $\Tilde{\boldsymbol{\alpha}}$ as was the case in the GLOBECOM paper) and the channel estimation signal model. While mentioning its relevance for sensing \dots
% {\color{red} Could you define the pilot scheme, $\cal P$ as we did in the GLOBECOM paper? For the moment we do not put the frame structure figure (to be decided later). Also, in the reminder of the paper I'm using notation $N_{\rm p}$ for the number of pilot instances instead of notation $M$ that we used in the GLOBECOM paper \dots}
% {\color{red} We should describe the guard interval associated with each one of the $N_{\rm}$ pilots} 
Let $\mathcal{P}\subset\Iintv{0,N-1}$ designate the indexes of the $N_{\rm p}\left(2|c_1|N(L-1)+2Q+1\right)$ received samples associated with $N_{\rm p}$ DAFT domain pilots, of values $\{{\rm p}_p\}_{p=1\cdots N_{\rm p}}$ inserted at indexes $\left\{m_p\right\}_{p=1\cdots N_{\rm p}}$ so as each pilot is preceded by $Q$ zero samples and followed by $\left(2|c_1|N(L-1)+Q\right)$ zero samples\footnote{We show in Appendix \ref{app:proof_theo} that the recovery results we prove hold even if we reduce the cardinality of $\mathcal{P}$ to $N_{\rm p}\left(2|c_1|N(L-1)+1\right)+2Q$ e.g., by allowing partial overlapping between neighbouring pilot guard intervals.}. Vector $\mathbf{y}_{\rm p}\triangleq[y_k]_{k\in\mathcal{P}}$ is the vectorized form of the received pilot samples. Referring to \eqref{eq:y_output_integer}, we can write
\begin{equation}
     \label{eq:yp}
     \mathbf{y}_{\rm p}=\underbrace{\mathbf{A}_{\cal P}\mathbf{M}}_{\triangleq\mathbf{M}_{\rm p}}\boldsymbol{\alpha}+\mathbf{w}_{\rm p}
 \end{equation}
where $[\mathbf{M}]_{l(2Q+1)+Q+q+1}=\boldsymbol{\Phi}\boldsymbol{\Delta}_q \boldsymbol{\Pi}^l \boldsymbol{\Phi}^H \mathbf{x}_{\rm p}$, $\mathbf{x}_{\rm p}$ is a $N$-long vector with entries equal to ${\rm p}_1, \ldots, {\rm p}_{N_{\rm p}}$ at indexes $\left\{m_p\right\}_{p=1\cdots N_{\rm p}}$ and to zero elsewhere, and $\mathbf{w}_{\rm p}\triangleq[\Tilde{{w}}]_{k\in\mathcal{P}}$. Here, $\mathbf{A}_{\cal P}$ is the $|\mathcal{P}|\times N$ matrix that chooses from a $N$-long vector the entries corresponding to $\mathcal{P}$.
 $\boldsymbol{\Delta}_q=\mathrm{diag}(e^{\imath2\pi qn},n=0\cdots N-1)$, $\boldsymbol{\Pi}$ is the $N$-order permutation matrix, $\boldsymbol{\Phi}=\pmb{\Lambda}_{c2} \mathbf{F}_N \pmb{\Lambda}_{c1}$ with $\mathbf{F}_N$ being the $N$-order discrete Fourier transform (DFT) matrix and $\boldsymbol{\Lambda}_{c}=\mathrm{diag}(e^{-\imath2\pi cn^2},n=0\cdots N-1)$.
 Recall that $\boldsymbol{\alpha}$ is hierarchically sparse due to Lemma \ref{lem:DS_HS}. Its sparsity support is assumed to be unknown to the receiver.
 % The sparsity support of $\boldsymbol{\alpha}$ is to be estimated because the channel's delay-Doppler profile (DDP) i.e., the values of $\left\{I_{l,q}\right\}_{l=0\cdots L-1,q=-Q\cdots Q}$, is assumed unknown to the receiver.
 %
\subsection{Algorithms for compressed sensing of DS-LTV channels}
The hierarchical hard thresholding pursuit (HiHTP) approach has been suggested in the literature \cite{hierarchical} for solving  hierarchically-sparse recovery problems such as Problem \eqref{eq:yp} for which it gives Algorithm \ref{algo:HiHTP}.
\begin{algorithm} 

    \caption{HiHTP for AFDM based compressed sensing}
    \label{algo:HiHTP}
    \begin{algorithmic}[1]
        \STATE \textbf{Input:} $\mathbf{M}_{\rm p}$, $\mathbf{y}_{\rm p}$, maximum number of iterations $k_{\max}$,  $s_{\rm d}$, $s_{\rm D}$
        \STATE $\hat{\boldsymbol{\alpha}}^{\left(0\right) }= 0$, $k=0$
        \STATE \textbf{repeat}
        \STATE $\Omega^{(k+1)}=L_{s_{\rm d},s_{\rm D}}\left(\boldsymbol{\alpha}^{\left(k\right)}+\mathbf{M}_{\rm p}^{\rm H}\left(\mathbf{y}_{\rm p}-\mathbf{M}_{\rm p}\boldsymbol{\alpha}^{\left(k\right)}\right)\right)$
        \STATE $\boldsymbol{\alpha}^{\left(k+1\right)}=\underset{\mathbf{z}\in\mathbb{C}^{L(2Q+1)}}{\arg \min}\left\{\left\|\mathbf{y}_{\rm p}-\mathbf{M}_{\rm p}\mathbf{z}\right\|,\sup\left(\mathbf{z}\right)\subset\Omega^{(k+1)}\right\}$
        \STATE $k=k+1$
        \STATE \textbf{until} $k=k_{\max}$ or $\Omega^{(k+1)}=\Omega^{(k)}$ (whichever earlier)
        \STATE \textbf{Output:} $\left(s_{\rm d},s_{\rm D}\right)$-sparse $\hat{\boldsymbol{\alpha}}^{\left(k\right)}$.
        
    \end{algorithmic}
\end{algorithm}
HiHTP is a modification of the classical hard thresholding pursuit (HTP) \cite{Foucart_HTP} by replacing the thresholding operator employed at each iteration of HTP with a \emph{hierarchically} sparse version $L_{s_{\rm d},s_{\rm D}}$.
% i.e., an operator that when applied to a vector returns the indices $\Omega$ of the sparsity support of the best $\left(s_{\rm d},s_{\rm D}\right)$-sparse approximation of that vector. More precisely,
To compute $L_{s_{\rm d},s_{\rm D}}(\mathbf{x})$ for a vector $\mathbf{x}\in\mathbb{C}^{L(2Q+1)}$ first a $s_{\rm D}$-sparse approximation is applied to each one of the $L$ blocks of $\mathbf{x}$ by keeping in each of them the largest $s_{\rm D}$ entries while setting the remaining ones to zero. A $s_{\rm d}$-sparse approximation is next applied to the result by identifying the $s_{\rm d}$ blocks with the largest $l_2$-norm. 

\subsection{Analyzing AFDM compressed sensing of DS-LTV channels}
To guarantee the convergence of Algorithm \ref{algo:HiHTP} and the recovery of $\boldsymbol{\alpha}$, the following technical assumption is needed.
\begin{assumption}
    \label{assum:technical_diag}
    Random variables $\{I_{q}^{(l_1)}\}_{q=-Q\cdots Q}$ are independent from $\{I_{q}^{(l_2)}\}_{q=-Q\cdots Q}$ for any $l_1\neq l_2$.
\end{assumption}
\begin{theorem}[HiRIP for AFDM based measurements]
\label{theo:HiRIP_AFDM}
Assume $\left|c_1\right|=\frac{P}{2N}$ and $P$ is set as the \emph{smallest} integer satisfying $(L-1)P+2Q+1\geq s_{\rm d}s_{\rm D}$. Then under Assumptions \ref{assum:technical} and \ref{assum:technical_diag} and for sufficiently large $L$, $Q$, sufficiently small $\delta$, and $N_{\rm p}>O\left(\frac{1}{\delta^{2}}\log^2\frac{1}{\delta}\log\frac{\log (LP)}{\delta}\log(LP)\log\frac{Q}{P}\right)$, the HiRIP constant $\delta_{s_{\rm d},s_{\rm D}}$ of matrix $\mathbf{M}_{\rm p}$ satisfies $\delta_{s_{\rm d},s_{\rm D}}\leq\delta$ with probability $1-e^{-\Omega \left(\log{\left(2\lceil\frac{Q}{P}\rceil+1\right)}\log{\frac{1}{\delta}}\right)}$. 
\end{theorem}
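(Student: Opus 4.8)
The plan is to certify HiRIP directly from Definition~\ref{def:HiRIP}: we must show $\big|\,\|\mathbf{M}_{\rm p}\mathbf{x}\|^2-\|\mathbf{x}\|^2\,\big|\le\delta\|\mathbf{x}\|^2$ uniformly over all $(s_{\rm d},s_{\rm D})$-sparse $\mathbf{x}\in\mathbb{C}^{(2Q+1)L}$. This is a property of $\mathbf{M}_{\rm p}$ alone, whose only source of randomness is the (randomly chosen) pilot placement; Assumptions~\ref{assum:technical},~\ref{assum:technical_diag} and Lemma~\ref{lem:DS_HS} enter to guarantee that the channel's delay--Doppler profile is $(s_{\rm d},s_{\rm D})$-sparse with the stated probability, which is what makes $\delta_{s_{\rm d},s_{\rm D}}$ the operative guarantee for recovering $\boldsymbol{\alpha}$.

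First I would unfold the structure of $\mathbf{M}_{\rm p}$ from \eqref{eq:y_output_integer}--\eqref{eq:yp}. With $|c_1|=P/(2N)$, a path $(l,q)$ appears in the DAFT domain as a shift by $q-Pl$ together with a phase $e^{\imath 2\pi(c_1l^2-m_pl/N+\phi_0(p,\cdot))}$ whose last term does not depend on $(l,q)$. Since each pilot $p$ is flanked by a guard interval of width $2Q+1+(L-1)P$, its footprint is exactly the range of $q-Pl$ over $q\in\Iintv{-Q,Q}$, $l\in\Iintv{0,L-1}$, so the pilot footprints are disjoint. Grouping coordinates by pilot $p$ and by DAFT slot $j$ yields
\[
\|\mathbf{M}_{\rm p}\mathbf{x}\|^2=\sum_{j\in J}\big\|\mathbf{E}_j\,\tilde{\mathbf{x}}_{C_j}\big\|^2,
\]
where $C_j$ is the ``collision class'' of all $(l,q)$ landing on slot $j$ (these classes partition the delay--Doppler grid), $\tilde{\mathbf{x}}$ is $\mathbf{x}$ with the entry-wise unimodular phases $e^{\imath 2\pi c_1l^2}$ absorbed (so $\|\tilde{\mathbf{x}}_{C_j}\|=\|\mathbf{x}_{C_j}\|$), and $\mathbf{E}_j$ is the $N_{\rm p}\times|C_j|$ column-submatrix of the normalized random partial-Fourier matrix with rows indexed by the random pilot positions $\{m_p\}$ and columns by the frequencies $\{l/N\}$. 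The choice of $P$ pays off twice: $|J|=(L-1)P+2Q+1=\Theta(s_{\rm d}s_{\rm D})=\Theta(LP)$, while each class has $|C_j|\le\kappa\triangleq\lceil(2Q+1)/P\rceil$, and the delays inside $C_j$ are consecutive integers.

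Because the $C_j$ partition the grid and the absorbed phases are unimodular, $\|\mathbf{x}\|^2=\sum_{j\in J}\|\tilde{\mathbf{x}}_{C_j}\|^2$, hence
\[
\big|\,\|\mathbf{M}_{\rm p}\mathbf{x}\|^2-\|\mathbf{x}\|^2\,\big|\le\big(\max_{j\in J}\|\mathbf{E}_j^{\rm H}\mathbf{E}_j-\mathbf{I}\|\big)\,\|\mathbf{x}\|^2 ,
\]
and it remains to show $\max_{j}\|\mathbf{E}_j^{\rm H}\mathbf{E}_j-\mathbf{I}\|\le\delta$ with high probability. Each $\mathbf{E}_j^{\rm H}\mathbf{E}_j$ is a $|C_j|\times|C_j|$ Hermitian Toeplitz matrix whose $(l,l')$ entry is the exponential sum $\frac{1}{N_{\rm p}}\sum_p e^{\imath 2\pi m_p(l-l')/N}$, i.e.\ a principal submatrix, on a block of $\le\kappa$ consecutive indices, of the Gram matrix of a randomly row-subsampled Fourier matrix; with the pilot positions drawn at random these entries are normalized sums of i.i.d.\ unimodular terms, concentrating around their means ($1$ on the diagonal, $0$ off it). Controlling this for all $j$ simultaneously is exactly an order-$\kappa$ restricted-isometry statement for a random bounded-orthonormal system, restricted to the $\Theta(LP)$ ``collision blocks''; invoking the sharp RIP bounds for such systems (in place of a crude matrix-Bernstein estimate, so as to pay $\log\kappa=\log(Q/P)$ rather than $\mathrm{poly}(\kappa)$ in the collision multiplicity) and union-bounding over the classes produces the stated requirement $N_{\rm p}>O\!\big(\delta^{-2}\log^2\tfrac1\delta\log\tfrac{\log(LP)}{\delta}\log(LP)\log\tfrac{Q}{P}\big)$ and failure probability $e^{-\Omega(\log(2\lceil Q/P\rceil+1)\log\frac1\delta)}$.

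\textbf{Main obstacle.} The delicate step is the last one: turning pointwise exponential-sum concentration into a \emph{uniform} operator-norm bound over all collision classes with only polylogarithmic oversampling. The collision phenomenon couples the block (delay) and intra-block (Doppler) indices, so $\mathbf{M}_{\rm p}$ is not a Kronecker or block product of two RIP matrices and no off-the-shelf HiRIP composition theorem applies; one has to run a chaining / Dudley-entropy argument adapted to the partially collapsed partial-Fourier structure, and verify that ``$P=$ smallest integer with $(L-1)P+2Q+1\ge s_{\rm d}s_{\rm D}$'' is precisely what keeps both $|J|=\Theta(s_{\rm d}s_{\rm D})$ and the multiplicity $\kappa$ under control --- both are needed for the estimate to close. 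A secondary chore is the bookkeeping of the pilot guard intervals (exact, vs.\ the overlapping variant of the footnote) so that the identity $\|\mathbf{M}_{\rm p}\mathbf{x}\|^2=\sum_{j}\|\mathbf{E}_j\tilde{\mathbf{x}}_{C_j}\|^2$ holds, or holds up to a controllable perturbation.
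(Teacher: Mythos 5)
Your collision-class decomposition is the right first step and coincides with the paper's sets $\mathcal{D}_l$, but the quantitative core of your argument does not close, and the device that makes the paper's proof work is missing. You reduce everything to showing $\max_j\big\|\mathbf{E}_j^{\rm H}\mathbf{E}_j-\mathbf{I}\big\|\le\delta$, i.e.\ to conditioning the \emph{full} $N_{\rm p}\times|C_j|$ per-class matrices with $|C_j|$ up to $\kappa=\lceil(2Q+1)/P\rceil$. That cannot be done with the theorem's $N_{\rm p}$: whenever $N_{\rm p}<\kappa$ the matrix $\mathbf{E}_j$ has a nontrivial kernel, so the deviation is at least $1$; more generally, RIP at order $\kappa$ for a bounded orthonormal system costs $N_{\rm p}\gtrsim\delta^{-2}\kappa$ up to polylog factors --- the sparsity order enters \emph{linearly}, and only the ambient dimension enters through $\log\frac{Q}{P}$. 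Your parenthetical claim that sharp RIP bounds let you ``pay $\log\kappa$ rather than $\mathrm{poly}(\kappa)$'' conflates these two roles, so your route forces $N_{\rm p}=\Omega(Q/P)$ instead of the stated polylogarithmic requirement.

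What the paper does instead is a \emph{sparsity transfer}: rather than certifying a worst-case HiRIP over all $(s_{\rm d},s_{\rm D})$-sparse vectors in the original (delay, Doppler) hierarchy, it permutes $\boldsymbol{\alpha}$ into the collision classes and shows, using Assumptions \ref{assum:technical} and \ref{assum:technical_diag} together with the choice of $P$ (a Chernoff argument on $\Tilde{S}_{{\rm D},l}=\sum_{(\tilde l,q)\in\mathcal{D}_l}I_{\tilde l,q}$, as in the proof of Lemma \ref{lem:DS_HS}), that with high probability the permuted vector is $(\tilde s_{\rm d},\tilde s_{\rm D})$-hierarchically sparse with $\tilde s_{\rm d}=(L-1)P+1$ (that level is sensed without compression) and $\tilde s_{\rm D}=(1+\epsilon)\log(LP)$. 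Hence only order-$O(\log(LP))$ RIP of the small $N_{\rm p}\times\left(2\lceil\frac{Q}{P}\rceil+1\right)$ per-class matrix is needed, and the subsampled-Fourier RIP theorem delivers exactly the stated $N_{\rm p}$ scaling and failure probability; this also corrects your reading that Assumptions \ref{assum:technical} and \ref{assum:technical_diag} play no role in the HiRIP statement --- they are essential to it. Finally, your declared ``main obstacle'' is a non-issue in the paper: after the permutation the measurement matrix factors \emph{exactly} as $\mathbf{I}_{(L-1)P+1}\otimes\Tilde{\mathbf{M}}_{\mathcal D}$ with $\Tilde{\mathbf{M}}_{\mathcal D}=\mathrm{diag}({\rm p}_1\cdots{\rm p}_{N_{\rm p}})\mathbf{F}_{2\lceil\frac{Q}{P}\rceil+1,{\rm p}}\boldsymbol{\Psi}$ (your own computation of the per-class Gram entries already shows the blocks are identical up to absorbable unimodular diagonals), so the off-the-shelf Kronecker HiRIP bound $\delta_{\tilde s_{\rm d}}+\delta_{\tilde s_{\rm D}}+\delta_{\tilde s_{\rm d}}\delta_{\tilde s_{\rm D}}$ with $\delta_{\tilde s_{\rm d}}=0$ applies directly, and no chaining or Dudley-entropy argument is required.
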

When $P=2Q+1$ AFDM achieves full diversity \cite{BemaniAFDM_TWC} and the measurements are non-compressive, while $P=1$ is the most compressive. By setting $P$ as in the theorem between these two extremes, each pilot instance gives in its $(L-1)P+2Q+1$-long guard interval a number of measurements close with high probability to the number $s_{\rm d}s_{\rm D}$ of unknowns. Of course, a number $N_{\rm p}>1$ of pilot instances is still required as the sparsity support needs to be estimated. But, asymptotically, this number has only a logarithmic growth.
\begin{proof}
The outlines of the proof is given in Appendix \ref{app:proof_theo}.
\end{proof}
\begin{corollary}[Recovery guarantee for AFDM based measurements]
\label{cor:HiHTP_convergence}
The sequence $\hat{\boldsymbol{\alpha}}^{(k)}$ defined by Algorithm \ref{algo:HiHTP} satisfies
\begin{equation}
    \|\hat{\boldsymbol{\alpha}}^{(k)}-\boldsymbol{\alpha}\|\leq \rho^k\|\boldsymbol{\alpha}^{(0)}-\boldsymbol{\alpha}\|+\tau\left\|\mathbf{w}_{\rm p}\right\|
\end{equation}
where $\rho<1$ and $\tau$ are constants defined in \cite[Theorem 1]{hierarchical}.
\end{corollary}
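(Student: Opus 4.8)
The plan is to derive the corollary as an essentially immediate consequence of the general HiHTP recovery guarantee \cite[Theorem 1]{hierarchical}, so that the only real work is verifying its two hypotheses in the present setting. Recall that this theorem applies to any linear inverse problem of the form \eqref{eq:yp} and asserts: if the measurement matrix $\mathbf{M}_{\rm p}$ has a hierarchical RIP constant $\delta_{3s_{\rm d},3s_{\rm D}}$ below a universal threshold $\delta^\star$ (the hierarchical counterpart of the $\delta_{3s}<1/\sqrt{3}$ condition in Foucart's HTP analysis \cite{Foucart_HTP}), then for every $(s_{\rm d},s_{\rm D})$-sparse $\boldsymbol{\alpha}$ the HiHTP iterates obey the contraction bound stated in the corollary, with $\rho<1$ and $\tau$ explicit functions of the HiRIP constants of $\mathbf{M}_{\rm p}$.

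Hypothesis 1, that $\boldsymbol{\alpha}$ is exactly $(s_{\rm d},s_{\rm D})$-sparse, is supplied by Lemma \ref{lem:DS_HS}: under Assumption \ref{assum:technical} it holds with probability $1-e^{-\Omega(\min((2Q+1)p_{\rm D},Lp_{\rm d}))}$. On that event the generic approximation-error term appearing in \cite[Theorem 1]{hierarchical} for merely compressible vectors vanishes, which is exactly why the bound here has the clean two-term form $\rho^k\|\boldsymbol{\alpha}^{(0)}-\boldsymbol{\alpha}\|+\tau\|\mathbf{w}_{\rm p}\|$.

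Hypothesis 2, smallness of the higher-order HiRIP constant of $\mathbf{M}_{\rm p}$, is delivered by Theorem \ref{theo:HiRIP_AFDM} after one bookkeeping adjustment: the recovery theorem needs $\delta_{3s_{\rm d},3s_{\rm D}}$ rather than $\delta_{s_{\rm d},s_{\rm D}}$, so I would re-invoke Theorem \ref{theo:HiRIP_AFDM} with the sparsity levels inflated by the constant factor $3$, choosing $P$ as the smallest integer with $(L-1)P+2Q+1\ge 9\,s_{\rm d}s_{\rm D}$. This enlarges $P$ by at most a bounded factor and only rescales the constants hidden in the $O(\cdot)$ bound on $N_{\rm p}$, leaving the asymptotic pilot count, the largeness requirements on $L,Q$, and the smallness requirement on $\delta$ unchanged in order. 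Taking $\delta$ small enough that $\delta<\delta^\star$ then gives $\delta_{3s_{\rm d},3s_{\rm D}}(\mathbf{M}_{\rm p})\le\delta<\delta^\star$ with probability $1-e^{-\Omega(\log(2\lceil Q/P\rceil+1)\log(1/\delta))}$. A union bound over the two failure events of Lemma \ref{lem:DS_HS} and the inflated Theorem \ref{theo:HiRIP_AFDM} places us on an event on which both hypotheses of \cite[Theorem 1]{hierarchical} hold, and applying that theorem there yields the corollary with $\rho<1$ and $\tau$ the constants it specifies.

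The only genuinely delicate point is this order-matching step: one must confirm that the construction and the HiRIP estimate of Theorem \ref{theo:HiRIP_AFDM} are stable under replacing $(s_{\rm d},s_{\rm D})$ by a constant multiple — i.e., that nothing in its proof (Appendix \ref{app:proof_theo}) hinges on the precise value $s_{\rm d}s_{\rm D}$ in the inequality defining $P$ beyond its role in matching the number of per-pilot measurements to the number of unknowns, a relation that degrades by at most a constant factor under inflation. Granting this, the corollary follows by citation alone, with no new concentration or isometry estimate required.
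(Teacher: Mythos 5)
Your proposal is correct and follows essentially the same route as the paper: invoke Theorem \ref{theo:HiRIP_AFDM} (together with Lemma \ref{lem:DS_HS} for the exact $(s_{\rm d},s_{\rm D})$-sparsity of $\boldsymbol{\alpha}$) to verify the HiRIP hypothesis of \cite[Theorem 1]{hierarchical}, and then apply that theorem to the HiHTP iterates. You are in fact more explicit than the paper about the one delicate point: the recovery theorem requires the inflated constant ($\delta_{3s_{\rm d},2s_{\rm D}}<1/\sqrt{3}$ in the paper's proof, not $\delta_{s_{\rm d},s_{\rm D}}$), and your re-invocation of Theorem \ref{theo:HiRIP_AFDM} with constant-factor-rescaled sparsity levels and a correspondingly enlarged $P$ is exactly the bookkeeping that the paper's one-line argument leaves implicit.
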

\begin{proof}
Thanks to Theorem \ref{theo:HiRIP_AFDM}, matrix $\mathbf{M}_{\rm p}$ with large-enough $L,Q,N_{\rm p}$ has a HiRIP constant that satisfies $\delta_{3s_{\rm d},2s_{\rm D}}<\frac{1}{\sqrt{3}}$. The conditions of \cite[Theorem 1]{hierarchical} are thus satisfied and the corollary follows from that theorem.
\end{proof}
% In Section \ref{sec:simulations} we show that the value of $N_{\rm p}$ dictated by Theorem \ref{theo:HiRIP_AFDM} translates into lower pilot overhead when using AFDM instead of measurement matrices based on other waveforms.
We next explain how the value of $N_{\rm p}$ dictated by Theorem \ref{theo:HiRIP_AFDM} translates into sub-Nyquist sampling rates for radar receivers.
%\vspace{-4mm}
\section{Application to sub-Nyquist radar}
\label{sec:sub-nyquist}
We now consider the case where the AFDM signal is destined for a sensing receiver either co-located with the transmitter (the mono-static setting) or in a remote device (the bi-static setting). In any of these settings, the non-zero complex gains $\alpha_{l,q}$ in \eqref{eq:ch_model} will represent a point target with a delay $l$ (related to the to-be-estimated range) and a Doppler frequency shift $q$ (related to the to-be-estimated velocity).
% We now describe a sensing receiver architecture that, for the sake of lowering the requirement on the sampling rate, can take advantage of both the multi-chirp nature of AFDM and of the minimization of the number $N_{\rm p}$ of pilots by optimization of the DAFT chirp rate $c_1$ (through parameter $P$). In practice, the sensing receiver could be co-located with the AFDM transmitter in the same device (the mono-static setting) or it could be a remote device (the bi-static setting) e.g., the transmitter could be a base station or an access point and the receiver a mobile terminal device. In any of these settings, the non-zero complex gains $\alpha_{l,q}$ in \eqref{eq:ch_model} will represent a point target with a delay $l$ (related to the to-be-estimated range) and a Doppler shift $q$ (related to the to-be-estimated velocity). 

Instead of applying DAFT to the received AFDM signal after sampling as in basic AFDM operation \cite{BemaniAFDM_TWC} (which would require a sampling rate at least equal to the signal bandwidth), an alternative consists in first de-chirping the received signal in the analog domain  with a continuous-time version \cite{AFDM_letter} of a DAFT chirp carrier e.g., of the $0$-th chirp $\left(e^{\imath2\pi (c_20^2+{\frac{1}{N}}0n+c_1n^2)}\right)_{n}$. The result is a multi-tone signal (as shown in Fig. \ref{fig:SamplingRate} in the case of $N_{\rm p}=1$ and $P=2$) with discontinuities due to the frequency wrapping characterizing AFDM chirp carriers. In this figure, the de-chirped signal occupies two disjoint frequency bands that get merged into one (without discontinuities) thanks to spectrum folding after sampling at rate $f_{\rm s}=\frac{(L-1)P+1}{T}$.
\begin{figure}
    \centering
    \includegraphics[width=0.9\columnwidth]{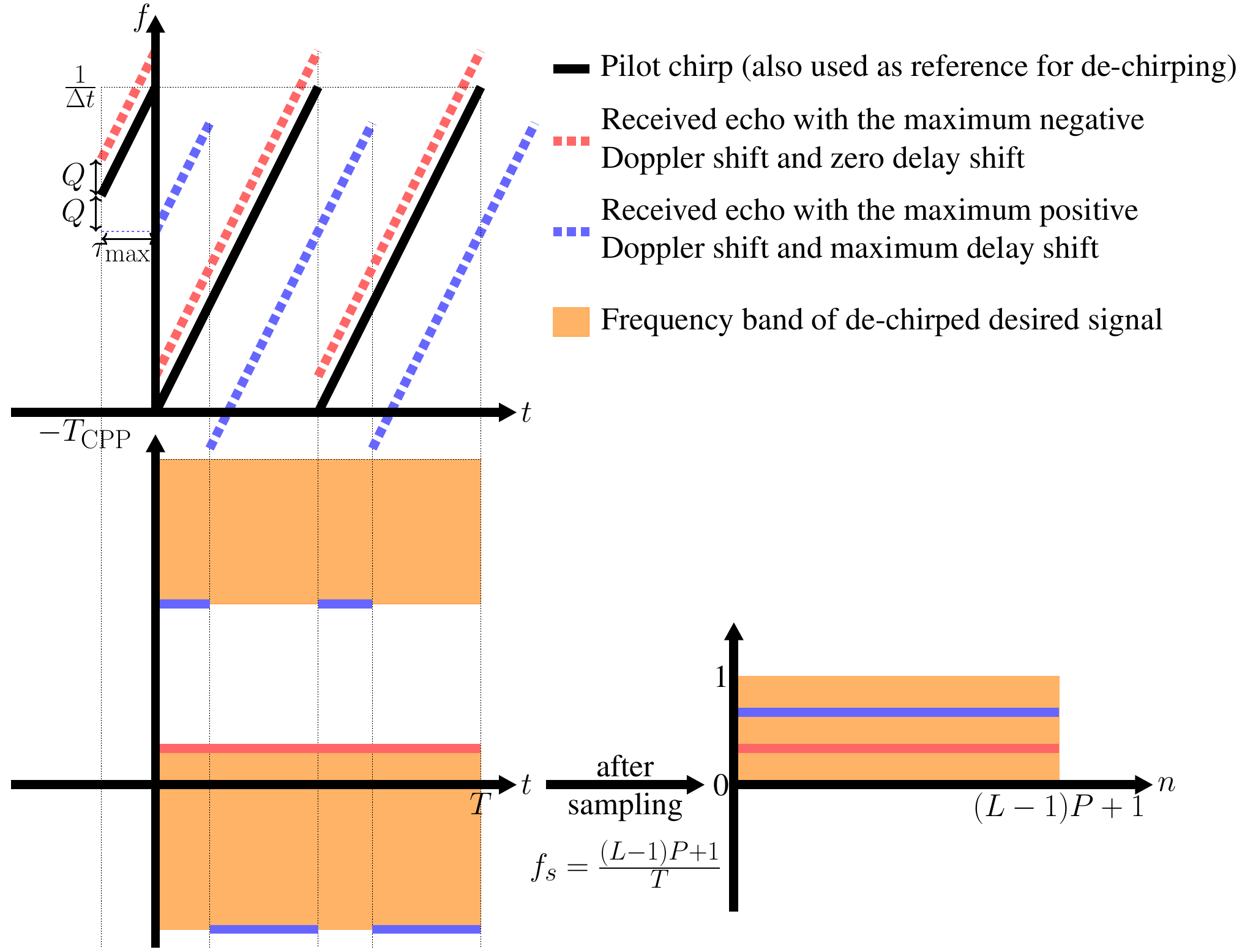}
    \caption{Time-frequency content of one AFDM pilot and its echoes, before and after analog de-chirping and sampling ($\tau_{\max} \triangleq (L-1)\Delta t$, $T_{\rm CPP} \triangleq L_{\rm CPP}\Delta t$)}
    \label{fig:SamplingRate}
    %\vspace{-5mm}
\end{figure}
In the general case of $N_{\rm p}\geq1$ pilots, if we restrict the total subset $\mathcal{P}$ of pilot guard indexes to be an interval, then sampling after de-chirping can be done at rate $f_{\rm s}=\frac{N_{\rm p}\left((L-1)P+1\right)}{T}$ to yield the vector $\mathbf{y}_{\rm p}$ used for target estimation. In most practical configurations $\frac{N_{\rm p}\left((L-1)P+1\right)}{T}\ll\frac{N}{T}=\frac{1}{\Delta t}$, and hence the sampling rate needed for AFDM sensing is significantly smaller than what is needed in sensing based on OFDM or OTFS waveforms.
%\vspace{-2mm}
\section{Numerical results}
\label{sec:simulations}
AFDM sparse recovery performance is now compared to that of OFDM and OTFS. For OFDM, transmission is organized in $N$-long frames, each constructed from $N_{\rm ofdm,symb}\approx2Q+1$ OFDM symbols each of which costing $L-1$ in CP overhead. Within each frame, $N_{\rm p,fd}$ subcarriers within $N_{\rm p,td}$ OFDM symbols are set as pilots \cite{afdm_gc}. As for OTFS, subcarriers are in the delay-Doppler domain forming a $M_{\rm otfs}\times N_{\rm otfs}$ grid (with $M_{\rm otfs}N_{\rm otfs}=N$). OTFS with orthogonal data-pilot resources \cite{embedded_otfs_bem} requires at least $N_{\rm p,otfs}=1$ pilot symbols with $\min(4Q+1,N_{\rm otfs})\min(2L-1,M_{\rm otfs})$ guard samples.

We used 100 realizations of channels having a Type-1 delay-Doppler sparsity with $p_{\rm d}=0.2$, $p_{\rm D}\in\{0.2,0.4\}$ and $N=4096, L=30, Q=7$ (corresponding to a 30 MHz transmission at a 70 GHz carrier frequency, a maximum target moving speed of 396 km/h and a maximum target range of 300 meters). For both AFDM and OFDM, sparse recovery of $\boldsymbol{\alpha}$ is done using HiHTP (Algorithm \ref{algo:HiHTP}). For OTFS, since sensing is done without compression, non-compressive estimation algorithms can be used \cite{BemaniAFDM_TWC}.
For each waveform, the number of pilots was set in such a way that the mean squared error ${\rm MSE}\triangleq\mathbb{E}[\left\|\hat{\boldsymbol{\alpha}}-\boldsymbol{\alpha}\right\|^2]$ is approximately $10^{-4}$ at $\mathrm{SNR}=20$ dB. Fig. \ref{fig:simulations1_2} shows an advantage of AFDM in terms of pilot overhead i.e., the number of samples in each frame needed as pilots and guards to achieve the target MSE performance. 
   \begin{figure}
  \centering
  \begin{tabular}{ c }
  \includegraphics[width=.77\columnwidth]{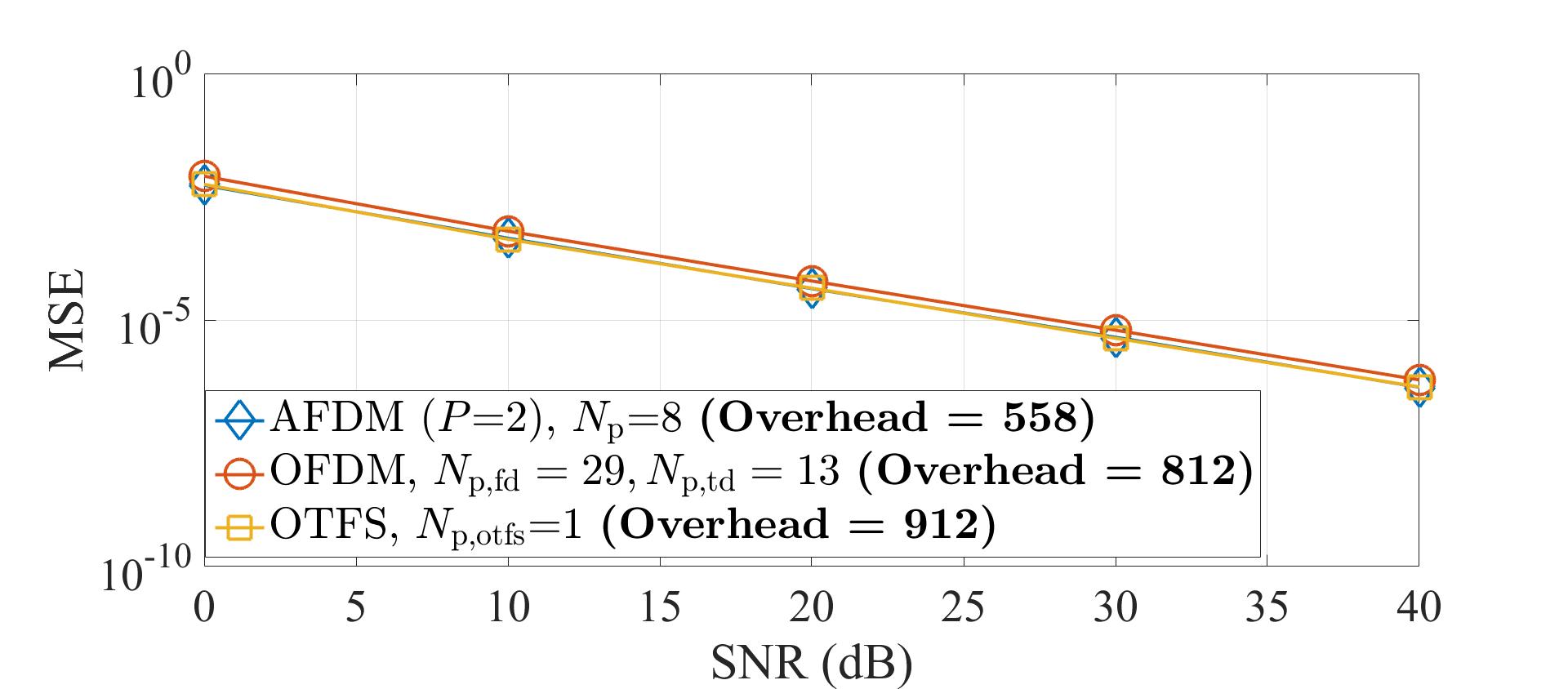} \\
  \small (a) $p_{\rm D}=0.2$\\
    \includegraphics[width=.77\columnwidth]{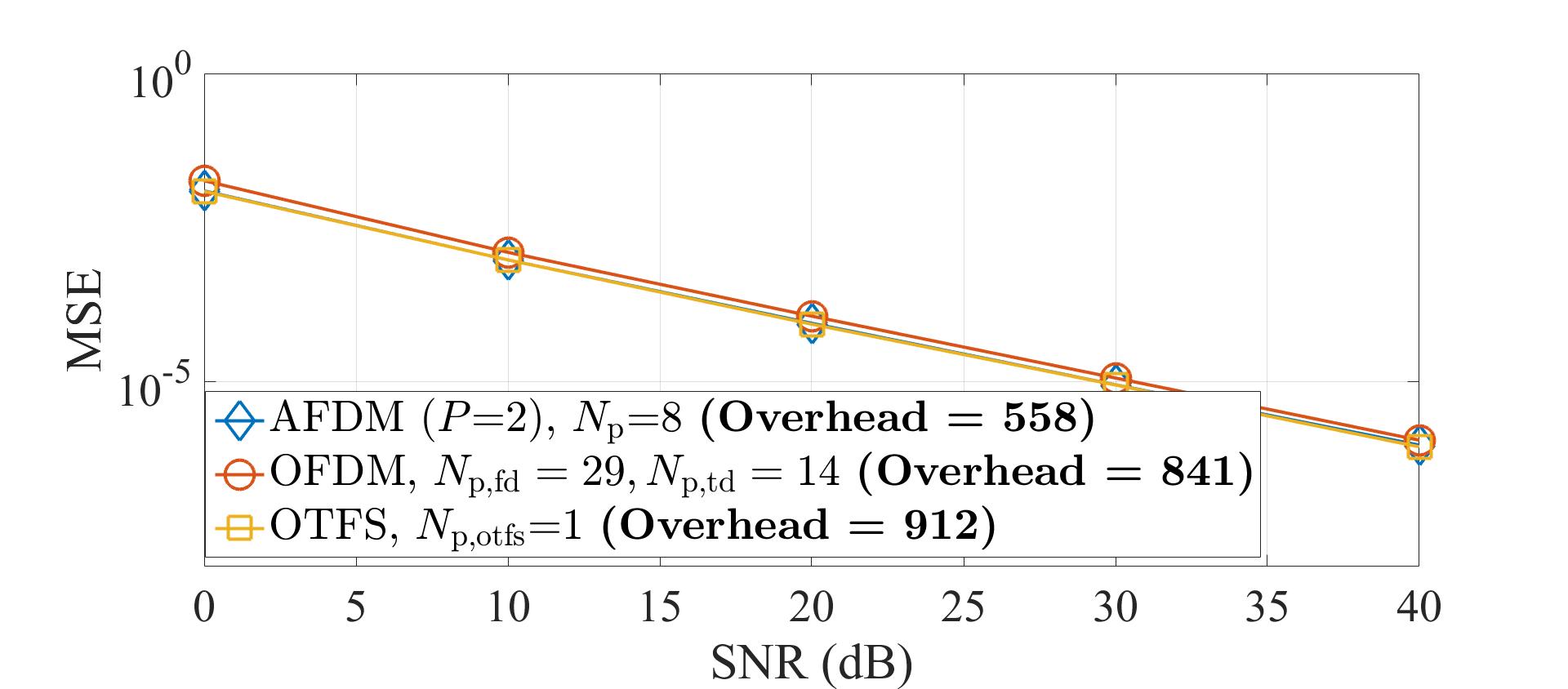} \\
      \small (b) $p_{\rm D}=0.4$
  \end{tabular}
  \medskip
  \caption{MSE and pilot overhead for $N=4096, L=30, Q=7, p_{\rm d}=0.2$, $N_{\rm ofdm,symb}=16$, $N_{\rm otfs}=16$, $M_{\rm otfs}=256$. Overhead: $N_{\rm p,td}N_{\rm p,fd}+(N_{\rm ofdm,symb}-1)(L-1)$ for OFDM, $\min(4Q+1,N_{\rm otfs})\min(2L-1,M_{\rm otfs})$ for OTFS, $N_{\rm p}\left((L-1)P+1\right)+(L-1)P+4Q$ for AFDM.}
  \label{fig:simulations1_2}
  \vspace{-4mm}
\end{figure}
However, the main focus in this paper is the gain that can be achieved in terms of sampling rate reduction when AFDM is employed for sub-Nyquist sensing. This gain is illustrated (for the same setting as Fig. \ref{fig:simulations1_2}) by Table \ref{tab:table1}.
\begin{table}[h!]
\vspace{-2mm}
  \begin{center}
    \caption{Minimal sampling rate at sensing receiver}
    \label{tab:table1}
    \begin{tabular}{m{0.18\linewidth} m{0.14\linewidth} m{0.14\linewidth} m{0.32\linewidth}}
    
 Waveform&OFDM & OTFS & AFDM \\
 \hline
   Sampling rate $f_{\rm s} $ (MHz)
& $30$ $=\mathrm{BW}$ &$30$ $=\mathrm{BW}$& { $3.45$ $=\frac{N_{\rm p}\left((L-1)P+1\right)}{T}$}\\
\hline

    \end{tabular}
    \end{center}
    \vspace{-2mm}
    \end{table}

\section{Conclusions}
The advantage of using AFDM instead of measurement matrices based on other waveforms for sub-Nyquist sensing to recover doubly-sparse delay-Doppler profiles has been rigorously established by linking delay-Doppler sparsity to the paradigm of hierarchically-sparse recovery. Future work will address the problem without any on-grid approximation.
% \begin{table}[htbp]
% \caption{Table Type Styles}
% \begin{center}
% \begin{tabular}{|c|c|c|c|}
% \hline
% \textbf{Table}&\multicolumn{3}{|c|}{\textbf{Table Column Head}} \\
% \cline{2-4} 
% \textbf{Head} & \textbf{\textit{Table column subhead}}& \textbf{\textit{Subhead}}& \textbf{\textit{Subhead}} \\
% \hline
% copy& More table copy$^{\mathrm{a}}$& &  \\
% \hline
% \multicolumn{4}{l}{$^{\mathrm{a}}$Sample of a Table footnote.}
% \end{tabular}
% \label{tab1}
% \end{center}
% \end{table}

% \begin{figure}[htbp]
% \centerline{\includegraphics{fig1.png}}
% \caption{Example of a figure caption.}
% \label{fig}
% \end{figure}
\appendices
\section{Proof of Lemma \ref{lem:DS_HS}}
\label{app:proof_lem}
Let $S_{\rm d}\triangleq\sum_{l=0}^{L-1}I_l$ be the number of active delay taps. From Definition \ref{def:dD_sparsity} and Assumption \ref{assum:technical}, $S_{\rm d}\sim\mathrm{B}\left(L,p_{\rm d}\right)$. applying the Chernoff's bound to $S_{\rm d}$ evaluated at $s_{\rm d}=(1+\epsilon)Lp_{\rm d}$ (with an $\epsilon>0$ that can be set as small as needed) gives
\begin{align}
    \label{eq:chernoff_S_d}
    \mathbb{P}\left[S_{\rm d}>s_{\rm d}\right]&\leq\left(\frac{p_{\rm d}}{\frac{s_{\rm d}}{L}}\right)^{s_{\rm d}}\left(\frac{1-p_{\rm d}}{1-\frac{s_{\rm d}}{L}}\right)^{L-s_{\rm d}}=e^{-\Omega(Lp_{\rm d})}.
\end{align}
As for $S_{\mathrm{D},l}$, since Assumption \ref{assum:technical} upper-bounds its CCDF by that of a $\mathrm{B}\left(2Q+1,p_{\rm D}\right)$ distribution, applying the Chernoff's bound to the latter evaluated at $s_{\rm D}=(1+\epsilon)(2Q+1)p_{\rm D}$ similarly gives \emph{joint} sparsity of $\{I_{q}^{(l)}\}_{l=0\cdots L-1}$ in the sense
\begin{equation}
    \label{eq:chernoff_S_D}
    \mathbb{P}\left[\exists l, I_l=1,S_{\mathrm{D},l}>s_{\rm D}\right]=e^{-\Omega((2Q+1)p_{\rm D})}.
\end{equation}
Combining \eqref{eq:chernoff_S_d} and \eqref{eq:chernoff_S_D} completes the proof of the lemma.

\section{Outlines of the proof of Theorem \ref{theo:HiRIP_AFDM}}
\label{app:proof_theo}
First, for each $l\in\Iintv{0,(L-1)P}$ we define $\mathcal{D}_l\triangleq\left\{(\tilde{l},q) \mathrm{s.t.}(q+P\tilde{l})_{(L-1)P+1}=l\right\}$ as the set of delay-Doppler grid points that potentially contribute to the pilot sample received at DAFT domain index $l$ (Fig. \ref{fig:interval_k}). Next, we define
$\Tilde{\boldsymbol{\alpha}}\triangleq\left[\boldsymbol{\alpha}_{\mathcal{D}_0}^{\rm T}\quad\cdots\quad\boldsymbol{\alpha}_{\mathcal{D}_{(L-1)P}}^{\rm T}\right]^{\rm T}$ where $\boldsymbol{\alpha}_{\mathcal{D}_l}\triangleq\left[\alpha_{l,q}\right]_{(l,q)\in\mathcal{D}_l}$. The entries of $\Tilde{\boldsymbol{\alpha}}$ are just a permutation of the entries of $\boldsymbol{\alpha}$ and estimating one of them directly gives an estimate of the other. Next, it can be shown that when $P$ is set as in the theorem $\Tilde{\boldsymbol{\alpha}}$ is $\left(\tilde{s}_{\rm d},\tilde{s}_{\rm D}\right)$-hierarchically sparse with high probability where
\begin{figure}
  \centering
  \begin{tabular}{ c @{\hspace{5pt}} c }
  \includegraphics[width=.4\columnwidth]{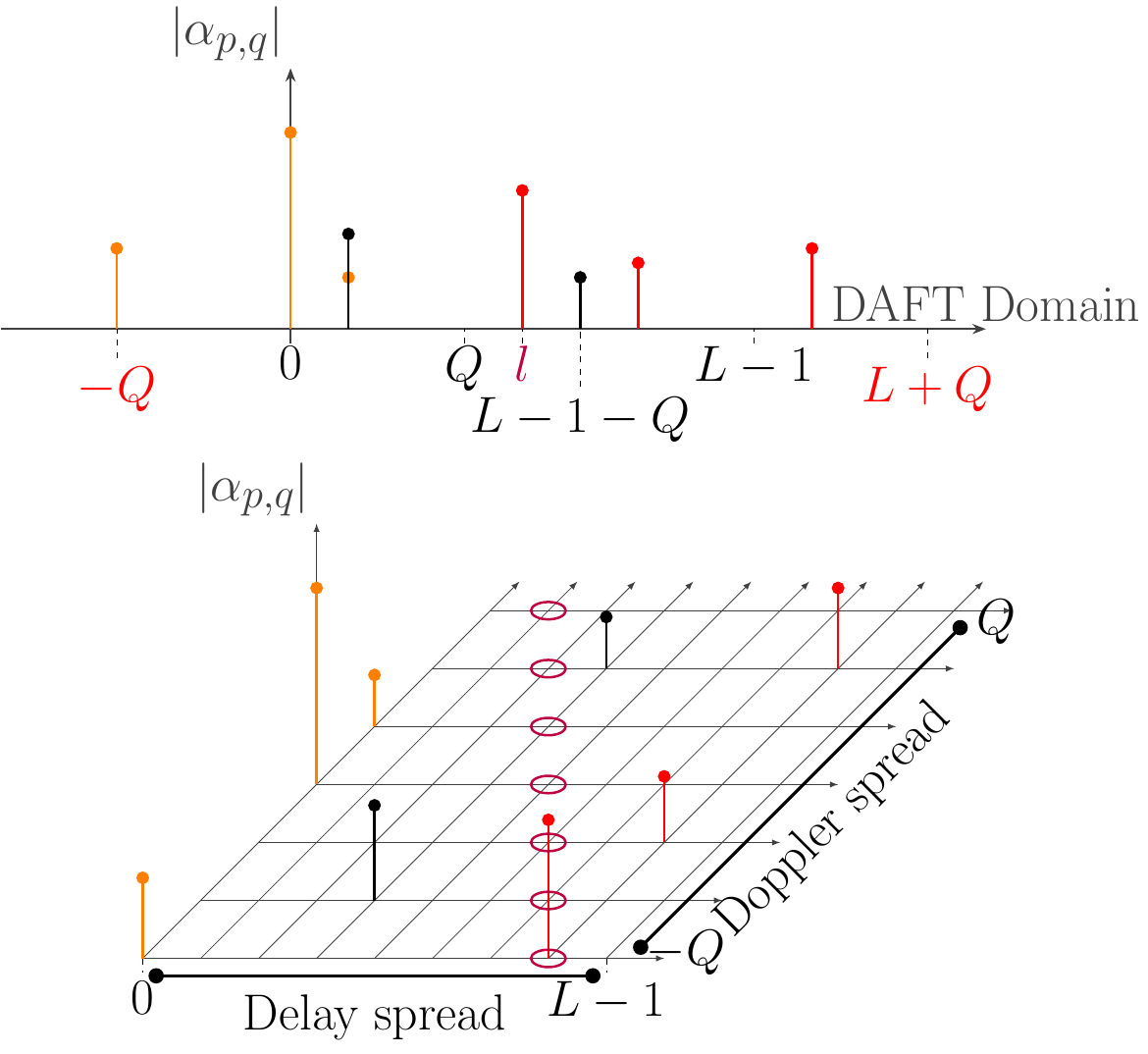} &
    \includegraphics[width=.4\columnwidth]{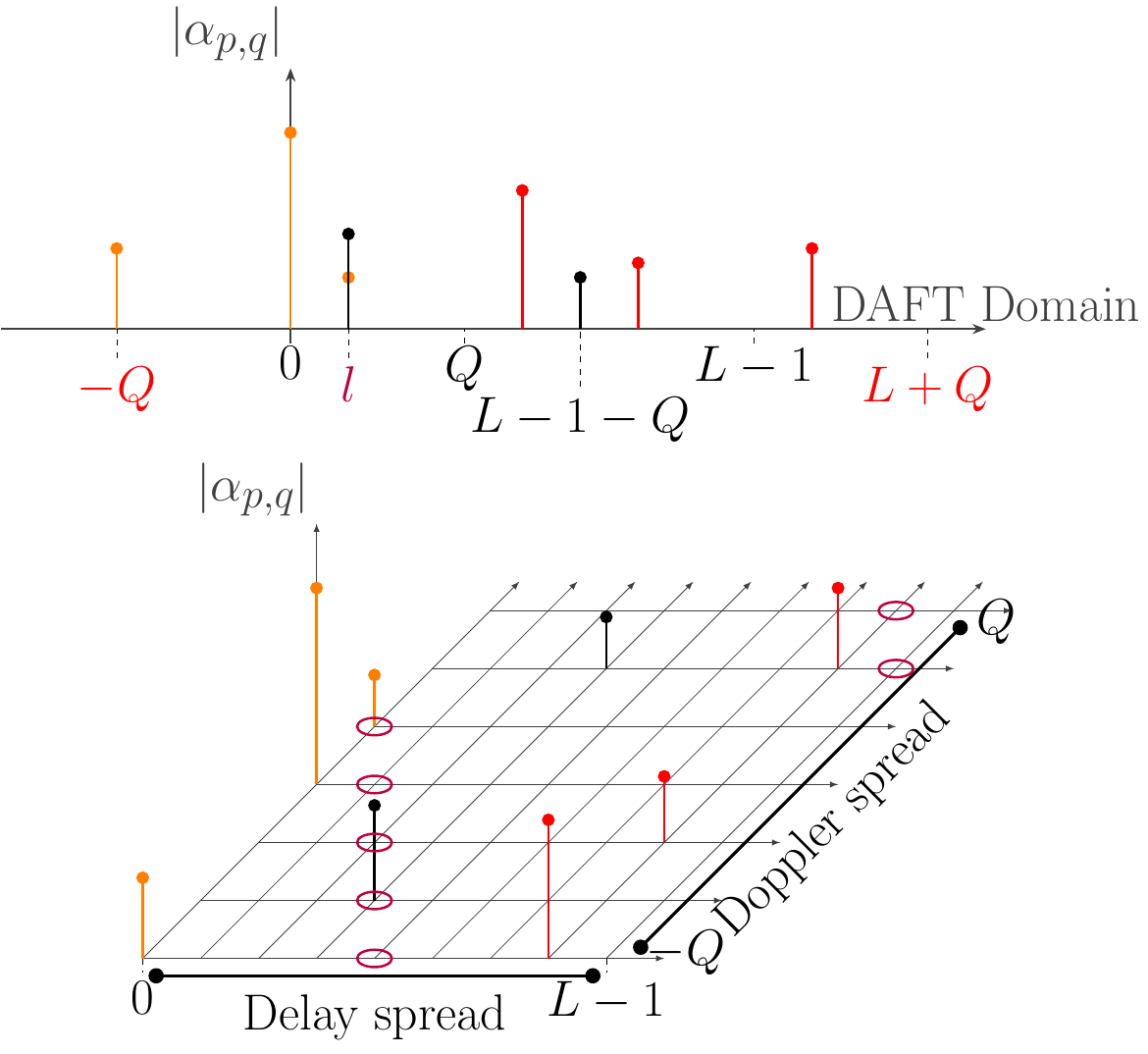} \\
    \small (a) &
      \small (b)
  \end{tabular}
  \medskip
  \vspace{-2mm}
  \caption{Two examples of the set $\mathcal{D}_l$ ($P=1$) (a) for an $l$ resulting in a whole diagonal, (b) for an $l$ resulting in a wrapped diagonal. In each one of the two examples, the grid points forming $\mathcal{D}_l$ are shown surrounded by red rings.}
  \label{fig:interval_k}
  \vspace{-2mm}
\end{figure}
\begin{equation}
    \label{eq:tilde_s_D}
    \tilde{s}_{\rm d}=(L-1)P+1,\quad,\tilde{s}_{\rm D}=(1+\epsilon)\log(LP)% with \epsilon as small as needed
\end{equation}
Indeed, the first level (of size $(L-1)P+1$) of $\Tilde{\boldsymbol{\alpha}}$ is sensed without compression with a number of measurements equal to $(L-1)P+1$ while $\tilde{s}_{\rm D}$ can be determined thanks to Definition \ref{def:dD_sparsity} and Assumptions \ref{assum:technical} and \ref{assum:technical_diag} and applying the same approach of the proof of Lemma \ref{lem:DS_HS} to $\Tilde{S}_{{\rm D},l}\triangleq\sum_{(\tilde{l},q)\in\mathcal{D}_l}I_{\tilde{l},q}$.

Now, we can write the signal model of sensing $\Tilde{\boldsymbol{\alpha}}$ as
\begin{equation}
    \label{eq:tilde_yp}
    \Tilde{\mathbf{y}}_{\rm p}=\Tilde{\mathbf{M}}_{\rm p}\Tilde{\boldsymbol{\alpha}}
    +\Tilde{\mathbf{w}}_{\rm p},
\end{equation}
where $\Tilde{\mathbf{y}}_{\rm p}=\left[\Tilde{\mathbf{y}}_{{\rm p},0}^{\rm T}\quad\cdots\quad\Tilde{\mathbf{y}}_{{\rm p},(L-1)P}^{\rm T}\right]^{\rm T}$. For each $l$, $\Tilde{\mathbf{y}}_{{\rm p},l}$ is a $N_{\rm p}\times 1$ vector composed of the pilot samples received at the $l$-th DAFT domain position in each of the $N_{\rm p}$ pilot instances. Note that by this definition $\Tilde{\mathbf{y}}_{\rm p}$ is obtained by permuting $\mathbf{y}_{\rm p}$ in \eqref{eq:yp} in accordance with the permutation that gives $\Tilde{\boldsymbol{\alpha}}$ from $\boldsymbol{\alpha}$.  Next, we prove that $\Tilde{\mathbf{M}}_{\rm p}$ has the following Kronecker structure
\begin{equation}
    \label{eq:kron}
    \Tilde{\mathbf{M}}_{\rm p}=\mathbf{I}_{(L-1)P+1}\otimes\Tilde{\mathbf{M}}_{\cal D},
\end{equation}
with $\Tilde{\mathbf{M}}_{\mathcal{D}} = \mathrm{diag}({\rm p}_1\cdots{\rm p}_{N_{\rm p}})\mathbf{F}_{2\lceil\frac{Q}{P}\rceil+1,{\rm p}}\boldsymbol{\Psi}$, $\mathbf{F}_{2\lceil\frac{Q}{P}\rceil+1,{\rm p}}$ is a $N_{\rm p}\times\left(2\lceil\frac{Q}{P}\rceil+1\right)$ partial Fourier measurement matrix and $\boldsymbol{\Psi}$ is a diagonal matrix with unit-modulus entries. We can thus use \cite[Theorem 4.5]{partial_fourier} pertaining to subsampled Fourier matrices to get that for sufficiently large $L$, $Q$, sufficiently small $\delta$, and
\begin{equation}
    \label{eq:cond_N_p}
    N_{\rm p}>O\left(\frac{1}{\delta^{2}}\log^2\frac{1}{\delta}\log\frac{\log (LP)}{\delta}\log(LP)\log\frac{Q}{P}\right)
\end{equation}
the RIP constant $\delta_{\tilde{s}_{\rm D}}$ of $\Tilde{\mathbf{M}}_{\mathcal{D}}$ satisfies $\delta_{\tilde{s}_{\rm D}}\leq\delta$ with probability $1-e^{-\Omega \left(\log{\frac{Q}{P}}\log{\frac{1}{\delta}}\right)}$. The RIP of $\mathbf{I}_{(L-1)P+1}$ trivially satisfies $\delta_{\tilde{s}_{\rm d}}=0$. As for the HiRIP of $\Tilde{\mathbf{M}}_{\rm p}$, we can apply \cite[Theorem 4]{hierarchical} to \eqref{eq:kron} thanks to its Kronecker structure to get
\begin{equation}
    \delta_{s_{\rm d},s_{\rm D}}\leq\delta_{\tilde{s}_{\rm d}}+\delta_{\tilde{s}_{\rm D}}+\delta_{\tilde{s}_{\rm d}}\delta_{\tilde{s}_{\rm D}}\leq\delta
\end{equation}
if $N_{\rm}$ and $\delta$ are as in \eqref{eq:cond_N_p}. This completes the proof.

\bibliographystyle{IEEEtran}
\bibliography{IEEEabrv,Citations}

\end{document}